\newtheorem{theorem}{Theorem}
\newtheorem{lemma}[theorem]{Lemma}
\newtheorem{definition}[theorem]{Definition}
\newtheorem{remark}[theorem]{Remark}       
\DeclareMathOperator*{\argmin}{\mathbf{arg\,min}}
\title{Source Coding with Free Bits and the Multi-Way Number Partitioning Problem\\
}
\author{
\begin{tabular}{cc}
  \begin{tabular}[t]{c}
    \textbf{Niloufar Ahmadypour} \\
    \textit{Dept. of Electrical and Computer Eng.} \\
    \textit{Isfahan University of Technology} \\
    Isfahan, Iran \\
    \texttt{n.ahmadypour@iut.ac.ir}
  \end{tabular}
  &
  \begin{tabular}[t]{c}
    \textbf{Amin Gohari} \\
    \textit{Dept. of Information Eng.} \\
    \textit{The Chinese University of Hong Kong} \\
    Hong Kong, China \\
    \texttt{agohari@ie.cuhk.edu.hk}
  \end{tabular}
\end{tabular}
}
\begin{document}
\maketitle

\begin{abstract}
We introduce a new variant of variable-length source coding for sending a source over two parallel channels, one of which is costly and the other free. We give a complete solution to this problem. Next, we relate the problem to the number partitioning problem, which is the task of dividing a given list of numbers into a pre-specified number of subsets such that the sum of the numbers in each subset is as nearly equal as possible. We introduce two new objective functions for this problem and show that an adapted version of the Huffman coding algorithm (with a runtime of $\mathcal{O}(n \log n)$ for input size $n$) produces the optimal solution for one objective function, and a nearly optimal solution for the other objective function.
\end{abstract}

\section{Introduction}
We consider an information-theoretic source coding problem in which a discrete
memoryless source $X$ with alphabet $\mathcal{X}$ is available at the encoder;
see Fig.~\ref{fig2}. The encoder communicates with the decoder over two
noiseless parallel channels. The first channel, depicted as channel (A) in
Fig.~\ref{fig2}, is a free channel and conveys a symbol
$A \in \{1,2,\ldots,k\}$. The second channel is costly, and the encoder is
charged for each transmitted bit. The objective is to design an encoding
scheme that allows the decoder to perfectly and instantaneously reconstruct
the source $X$ while minimizing the average number of bits transmitted over
the costly channel.

When $k = 2^{k'}$, this problem reduces to finding a zero-error, prefix-free
binary coding scheme $\mathcal{C}$ that solves the following optimization
problem:
\begin{align}
\label{opteq}
\mathcal{C}^* = \argmin_{\mathcal{C}:\mathcal{X} \to \{0,1\}^*}
\mathbb{E}\left[\bigl( l_{\mathcal{C}}(X) - k' \bigr)_+\right] ,
\end{align}
where $l_{\mathcal{C}}(x)$ denotes the length of the codeword assigned to the
symbol $x$ under the code $\mathcal{C}$, and $(t)_+ := \max\{t,0\}$.

The optimal coding scheme admits a natural partition-based interpretation.
Specifically, the free channel conveys the index of a subset of source symbols,
while the costly channel carries a prefix-free binary codeword that uniquely
identifies the source symbol within the indicated subset.\footnote{It resembles superposition coding in broadcast channels, with the message on the free channel resembling the cloud center and the message on the costly channel resembling the satellites.} In this manner, the
encoding problem reduces to finding a partition of the source alphabet into
$k$ subsets and an associated collection of prefix-free codes that jointly
minimize the expected number of transmitted bits on the costly channel.

\begin{figure}[t]
    \centering
    \begin{tikzpicture}[
        scale=1, 
        transform shape,
        block/.style={
            draw, 
            rectangle, 
            minimum height=1.2cm, 
            minimum width=2.0cm, 
            align=center,
            font=\large
        },
        thick line/.style={
            thick,
            shorten >=1pt,
            shorten <=1pt
        }
    ]

        \node [block] (source) {Source X};
        \node [block, right=1.0cm of source] (encoder) {Encoder};
        \node [block, right=3.0cm of encoder] (decoder) {Decoder};
        \node [block, right=1.0cm of decoder] (receiver) {Receiver};

        \draw [thick line] (source) -- node[above] {x} (encoder);

        \draw [thick, dashed] ([yshift=2pt]encoder.east) -- ([yshift=2pt]decoder.west);
        \draw [thick] ([yshift=-2pt]encoder.east) -- ([yshift=-2pt]decoder.west);
        
        \node [above=0.2cm] at ($(encoder.east)!0.5!(decoder.west)$) {\large Channel (A)};

        \draw [thick line] (decoder) -- node[above] {$\hat{x}$} (receiver);

    \end{tikzpicture}
\caption{Transmission of a discrete memoryless source over two parallel
noiseless channels: a free channel conveying the subset index and a costly
binary channel conveying the remaining codeword bits.}    \label{fig2}
\end{figure}

When $k=1$, the problem reduces to the standard variable-length source coding problem, for which Huffman coding is optimal. We prove that a generalized version of Huffman coding, \emph{Early stopping Huffman algorithm,} is optimal for the problem with general $k$. 

\subsection{Number partitioning problem}
We relate our source coding problem to the number partitioning problem, which is also a focus of this work. Indeed, the ``early stopping Huffman algorithm'' is already introduced in \cite[Theorem~1]{rvh} as an \emph{approximate} solution to the number partitioning problem (for our source coding problem, it is the exact solution).

Let $S=(\beta_1, \beta_2, \cdots, \beta_n)$ be a list of $n$ positive numbers. The number partitioning problem is the task of partitioning $S$ into $k$~subsets $S_1, S_2, \cdots, S_k$ so that the sum of the numbers in different subsets ($q_i=\sum_{\beta_j \in S_i } \beta_j$ for $1 \leq i \leq k$) are as nearly equal as possible. For instance, if  $S=(1,1,2,3,4,5)$ and $k=2$, we can consider the following partition $(1,1,2,4)$ and $(3,5)$. The numbers in each subset add up to 8, so this is a completely balanced partition. The number partitioning problem finds various applications such as multi-processor scheduling (minimizing the total time of processing) \cite{garey79,ggreedy69} or voting manipulation \cite{walsh}; see also \cite{worst2007, korf98}.

Different objective functions for the number partition problem can be defined \cite{ethan18,rvh}:
\begin{enumerate}
\item \label{item1}[\emph{Min-Difference objective function}]
Minimize the difference between the largest and smallest subset sums, \emph{i.e.,} minimize $\max_{1\leq i\leq k}q_i- \min_{1\leq i\leq k}q_i$,
\item \label{item2}[\emph{Min-Max objective function}]
Minimize the largest subset sum, \emph{i.e.,} minimize $\max_{1\leq i\leq k}q_i$,
\item \label{item3}[\emph{Max-Min objective function}]
Maximize the smallest subset sum,  \emph{i.e.,} maximize $\min_{1\leq i\leq k}q_i$,
\item \label{item4}[\emph{Entropic objective function}]
Minimize the KL divergence distance between the normalized subset-sum vector 
$(q_1/M, q_2/M, \cdots, q_k/M)$ and the uniform distribution $(1/k, 1/k, \cdots, 1/k)$, where 
$M=\sum_{i=1}^k q_i=\sum_{i=1}^n \beta_i$ is sum of all numbers in the list and the KL divergence between two distributions $p_X$ and $q_X$ is defined as \begin{align}D(p_X \| q_X)= \sum_{x \in \mathcal{X}}p_X(x) \log \dfrac{p_X(x)}{q_X(x)}.
 \end{align}

Equivalently, we can maximize the entropy of the normalized subset-sum vector 
$$H(q_1/M, q_2/M, \cdots, q_k/M)=\sum_{i=1}^k \frac{q_i}{M}\log\frac{M}{q_i}$$
where all logarithms in this paper are in base two.
\end{enumerate}
While these objective functions are all equivalent when $k=2$, none of them are equivalent to the others for $k>2$ \cite{korf2010}. 
For the case of $k=2$, Karp proved that the decision version of the number partitioning problem is NP-complete \cite{karp72}. For the entropic objective function and for $k\geq 2$, the authors in \cite{rvh} prove that finding the function that maximizes the entropy is NP-hard. However, there are some algorithms such as pseudo-polynomial time dynamic programming solutions or some heuristic algorithms that solve the problem approximately or completely \cite{
 korf2009, korf2010, karp72, ethan18, korf2014, moffit2013,
ggreedy69,korf95,karp82, korf2011,rvh}. 
For instance, the greedy algorithm gives a $(4/3 -  1/(3k))$-approximation
 in time $\mathcal{O}(2^k n^2)$ for the Min-Max objective function \cite{ggreedy69}, and this algorithm is optimal for $n\leq k+2$ \cite{korf2011}. For the entropic objective function, in \cite[Theorem~$1$]{rvh}, the authors present the ``Early stopping Huffman algorithm,"  with the time complexity of $\mathcal{O}(n\log(n))$ which yields a partition whose entropy is at most $0.086...$ less than the maximum entropy value.

While the four objective functions are not equivalent to each other for $k>2$, they are not unrelated. The max–min inequality implies that the Min-Max objective function is always greater than or equal to the Max-Min objective function. Moreover, the Min-Max objective function is also related to the entropic objective function. To see this, note that the Min-Max objective function is equivalent to 
maximizing the min-entropy of the normalized subset-sum vector 
$(q_1/M, q_2/M, \cdots, q_k/M)$; here the 
min-entropy of a discrete random variable $A$ is defined as 
$H_\infty(A) = -\log \max_{a\in\mathcal{A}} P_A(a).$ Note that the min-entropy serves as a lower bound on the Shannon entropy of $A$ defined by $H(A)=-\sum_a p_A(a)\log(p_A(a))$. Thus, a partition that is optimal for the Min-Max objective (which maximizes min-entropy) is not necessarily optimal for the entropic objective (which maximizes Shannon entropy).

Both the min-entropy and the Shannon entropy are the limit cases of the R\'enyi entropy of order $\alpha \in (0,1) \cup (1,\infty)$ defined as follows:
\begin{align}
 H_{\alpha}(A)=\dfrac{1}{1-\alpha} \log \left( \sum_{a \in \mathcal{A}} p_A^\alpha (a) \right),   
 \end{align}
As $\alpha$ tends to one, the R\'enyi entropy reduces to the Shannon entropy, and as $\alpha$ tends to infinity, it reduces to the min-entropy. In fact, in \cite{sason}, the author studies the R\'enyi-entropic objective function and generalizes the results of \cite{rvh}.

\subsection{New objective functions for the number partition problem}
Using the source coding problem introduced earlier, in this paper, we define a 5th objective function, closely related to the entropic objective function, which we call the \emph{compression objective function}. We prove that the \emph{exact solution} of the optimization problem with the compression objective function can be found in time  $\mathcal{O}(n\log(n))$ using the Early stopping Huffman algorithm.

Next, as stated above, the Min-Max objective function can be understood in terms of the information-theoretic concepts (the min-entropy). What about the Max-Min objective function? We define another new class of objective functions called the \emph{$\alpha$-divergence objective function} using the $\alpha$-R\'enyi divergence that reduces to the Max-Min objective function as $\alpha$ tends to infinity. When $\alpha=1$, this objective function reduces to the \emph{product objective function} which maximizes the product of subset-sum values, i.e., maximizes 
\begin{equation}
    \label{prod-objf}
\prod_{i=1}^k q_i.\end{equation}
This measure is a meaningful measure of uniformity since $\sum_{i=1}^kq_i$ is fixed and $\prod_{i=1}^k q_i$ reaches its maximum when $q_i$'s are equal. 
    This objective function is known as the Nash collective utility function (CUF) in the literature on fair division and resource allocation. We also study bounds on the optimal solution of the new objective function.

\color{black}

This paper is organized as follows: in Section~\ref{s2}, we introduce the compression objective function and the source coding problem. We show that the variant of the Huffman algorithm \cite{rvh} yields the exact solution for this objective function. In Section~\ref{s3}, we relate this to the number partitioning problem and present the $\alpha$-divergence objective function, its relation with other ones, and the bounds on its optimal solution. Also, a principle of optimality is stated and proved for the entropy and the product objective functions.
\color{black} 

\section{Preliminaries}

\begin{definition}
    Suppose that $\boldsymbol{p}$ and $\boldsymbol{q}$ are two sequence in  $\mathbb{R}^n$ that are sorted in descending order and $\sum_{i=1}^n p_i =\sum_{i=1}^n q_i$, we say that $\boldsymbol{p}$ is majorized by $\boldsymbol{q}$ or equivalently $\boldsymbol{q}$ majorizes $\boldsymbol{p}$ (written as $\boldsymbol{p}\preceq \boldsymbol{q}$) if and only if $\sum_{k=1}^i p_k \leq \sum_{k=1}^i q_k$ for all $1 \leq i \leq n$. Also a function $f:\mathbb{R}^n \to \mathbb{R}$ is Schur convex (Schur concave) if for all $\boldsymbol{p},\boldsymbol{q} \in \mathbb{R}^n$, such that $\boldsymbol{p} \preceq \boldsymbol{q}$ one has that $f(\boldsymbol{p}) \leq f(\boldsymbol{q})$ ($f(\boldsymbol{p}) \geq f(\boldsymbol{q})$).
\end{definition}

\begin{definition}
For a convex function $f: (0, \infty) \to \mathbb{R}$, with $f(1)=0$ and two probability mass functions $p_{X}(x)$ and $q_{X}(x)$ defined on a finite set $\mathcal{X}$ the $f$-divergence is defined as follows
\begin{align}
   D_f(p_X \| q_X) = \sum_{x \in \mathcal{X}} q_X(x)f\left( \dfrac{p_X(x)}{q_X(x)} \right), 
\end{align}
where
\begin{align*}
    &f(0):=\lim_{x \to 0^+} f(x),\\
    &0f(\dfrac{0}{0}):=0,\\
    &0f(\dfrac{a}{0}):= \lim_{x \to 0^+}xf(\dfrac{a}{x}) =a\lim_{x \to \infty}\dfrac{f(x)}{x}, \; a>0.
    \end{align*}
\end{definition}
A well-known subclass of $f$-divergences is the class of Alpha-divergences. The parametric functions that generate this class are defined as follows. For $\alpha \in \mathbb{R}$
\begin{align}
 f_\alpha(t) := \begin{cases}
 \dfrac{t^\alpha-\alpha t-(1-\alpha)}{\alpha(\alpha-1)}, & \alpha \neq 0, \alpha \neq 1,\\
 t\ln t -t+1, &   \alpha=1, \\
 -\ln t+t-1, & \alpha=0.
 \end{cases}\label{deffalpha}
\end{align}
The function $f_\alpha:(0,\infty) \to \mathbb{R}$ is a convex non-negative function with $f_\alpha (1)=0$.

\begin{definition}[The R\'enyi divergence \cite{renyi}]
Suppose that $p_X$ and $q_X$ are probability mass functions defined on $\mathcal{X}$, the R\'enyi divergence of order $\alpha \in (0,1) \cup (1,\infty)$ is defined as
 \begin{align}
     D_{\alpha}(p_X \| q_X)=\dfrac{1}{\alpha-1} \log \sum_{x \in \mathcal{X}} p_X^{\alpha}(x)q_X^{1-\alpha}(x),
 \end{align}
 by continuity we have
 \begin{align}
   \label{dinf}
     D_{\infty}(p_X \| q_X) &= \log \sup_{x \in \mathcal{X}} \dfrac{p_X(x)}{q_X(x)}.
 \end{align}
 \label{defR}
\end{definition}
Using $f_\alpha$ as defined in \eqref{deffalpha}, for $\alpha\in~(0,\infty)$, one can directly check the following well-known one-to-one correspondence
\begin{align} \label{eq57}
  D_{\alpha} (\boldsymbol{p} \| \boldsymbol{q}) = \begin{cases} 
  \dfrac{1}{\alpha-1} \log \left(  1+\alpha (\alpha -1)  D_{f_\alpha} (\boldsymbol{p} \| \boldsymbol{q}) \right),    &   \alpha \neq 1, \\
  (\log e) D_{f_1} (\boldsymbol{p} \| \boldsymbol{q}), & \alpha =1.
    \end{cases}
\end{align}
The next lemma is a special case of \cite[Theorem~7]{sasonnew} which presents an upper bound used in the proof of Theorem~\ref{main1}.
\begin{lemma} \label{lemma-w}
Suppose that $\boldsymbol{P}_n(\rho)$ for $n\geq 2$ and $\rho > 1$ is the set of all probability mass vectors $\boldsymbol{p}$ (sorted in descending order) of size $n$ with $\dfrac{p_1}{p_n} \leq \rho$, and $\boldsymbol{u}_n$ is the Uniform probability mass vector, then for each $\boldsymbol{p} \in \boldsymbol{P}_n(\rho)$, if $\alpha \in (0,1) \cup (1,\infty)$ we have
\begin{align}\nonumber
    D_\alpha(\boldsymbol{u}_n \| \boldsymbol{p}) \leq& \dfrac{1}{\alpha-1}\log \left(  1+ \dfrac{1+\alpha(\rho-1)-\rho^{\alpha}}{(1-\alpha)(\rho-1)} \right)\\ \label{eq73}
    & - \dfrac{\alpha}{\alpha-1}\log \left(  1+ \dfrac{1+\alpha(\rho-1)-\rho^{\alpha}}{(1-\alpha)(\rho^{\alpha}-1)} \right),
\end{align}
and if $\alpha =1$ we get
\begin{align}\label{eq74}
    D_\alpha(\boldsymbol{u}_n \| \boldsymbol{p}) \leq& \dfrac{\rho \log \rho}{\rho-1} 
     - \log \left(   \dfrac{e\rho \ln \rho} {\rho-1} \right),
\end{align}
where the RHS of \eqref{eq74} is the limit $\alpha \to 1$ of the RHS of \eqref{eq73}.   
\end{lemma}
\begin{proof}
Using Theorem~7 of \cite{sasonnew}, for each $n\geq 2$, $\rho > 1$ and $\boldsymbol{p} \in \boldsymbol{P}_n(\rho)$,  we know that:
\begin{align}
 D_{f_\alpha} (\boldsymbol{u}_n \| \boldsymbol{p} )  &\leq  \lim_{n \to \infty}  \max_{\boldsymbol{q} \in \boldsymbol{P}_n(\rho)} D_{f_\alpha} (\boldsymbol{u}_n \| \boldsymbol{q} ).
\end{align}
Now Corollary~1 and Eq.~(149) of \cite{sasonnew} result in for all $\alpha\in~(0,1) \cup (1,\infty)$
\begin{align}\nonumber
 & \lim_{n \to \infty}  \max_{\boldsymbol{q} \in \boldsymbol{P}_n(\rho)} D_{f_\alpha} (\boldsymbol{u}_n \| \boldsymbol{q}) \\ \label{eq55}
  & \ \  = \dfrac{1}{\alpha (\alpha-1)} \left( \dfrac{(1-\alpha)^{\alpha-1}(\rho^\alpha-1)^\alpha(\rho-\rho^\alpha)^{1-\alpha}}{(\rho-1)\alpha^\alpha}-1 \right), 
  \end{align}
and for $\alpha =1$ by a continuous extension we get
\begin{align} \label{eq56}
  & \lim_{n \to \infty}  \max_{\boldsymbol{q} \in \boldsymbol{P}_n(\rho)} D_{f_1} (\boldsymbol{u}_n \| \boldsymbol{q}) = \dfrac{\rho \ln (\rho-1)}{\rho -1}- \ln \left( \dfrac{e \rho \ln \rho}{\rho-1} \right).   
\end{align}

Therefore, using Equations \eqref{eq55}, \eqref{eq56} and \eqref{eq57} for $\alpha\in~(0,1) \cup (1,\infty)$  we get \eqref{eq73} and for $\alpha  =1$ we get \eqref{eq74}. 
\end{proof}
\section{The Source Coding Problem}\label{s2}
Let $X\sim P_X$ be a given discrete random variable. Assume that $A=f(X)$ is transmitted on the free channel for some function $f:\mathcal{X}\rightarrow\mathcal{A}$ where $\mathcal{A}=\{1,2,\cdots, k\}$. Let $p_i(x)$ be the conditional distribution of $X$ given $A=i$ for $1\leq i\leq k$. Let
$\mathcal{C}_i$ be the Huffman code for compressing $X$ when $X\sim p_i(x)$. Then, the objective function in \eqref{opteq} equals
\begin{align}L(X|A)\triangleq\sum_{i}\mathbb{P}[A=i]\mathbb{E}(\ell_i(X)|A=i)\label{eqnLXA2}\end{align}
where $\ell_i(x)$ is the length of the Huffman code assigned to symbol $x$ in $\mathcal{C}_i$.

\begin{algorithm}[t]
 \vspace{1em}
 Input a list $S_0=(\beta_1, \beta_2, \cdots, \beta_n)$ and $k$\;
 Set $i\leftarrow 0$\;
 \While{$|S_i|>k$}{
  Sort the list in $S_i$ in increasing order as $S_i=(b_1,b_2,...,b_m)$ where
  $b_1\leq b_2\leq\cdots\leq b_m$
  \;
  Merge the smallest numbers $b_1$ and $b_2$ together and form the list
  $S_{i+1}=(b_1+b_2, b_3, b_4, \cdots, b_m)$
  \;
  Increase $i$ by one\;
 }
 Put all the numbers merged together in the same group. 

\caption{Early Stopping Huffman Algorithm}\label{algorithm}
\end{algorithm}
Consider Algorithm \ref{algorithm}.  This algorithm is similar to Huffman coding except that the algorithm is stopped prematurely when the size of the list becomes equal to $k$. We prove that Algorithm \ref{algorithm}, when applied to the sequence  $\{p_X(x), x\in\mathcal{X}\}$ gives the optimal solution for solving 
\begin{align}\label{eqnLXA}
\argmin_{ \small{f:\mathcal{X}\rightarrow \mathcal{A} }} L(X|A)
\end{align} 
where $A=f(X)$. 

Before we state the main theorem about optimality of the algorithm, we explain the procedure with an example. Let $k=2$ (one bit on the free channel).
 Consider the following distribution: $$X\sim(1/16,1/16,2/16,3/16,4/16,5/16).$$ The algorithm produces the following lists:
$(1/16,1/16,2/16,3/16,4/16,5/16)\mapsto (\textbf{2}/16,2/16,3/16,4/16,5/16)\mapsto (3/16,\textbf{4}/16,4/16,5/16)
\mapsto (4/16,5/16,\textbf{7}/16)
\mapsto (\textbf{7}/16,\textbf{9}/16)$. One can see that the numbers $1/16,1/16,2/16,3/16$ are grouped together (adding up to $7/16$), and $4/16,5/16$ are also grouped together during the execution of the algorithm (adding up to $9/16$). This yields the optimal partitioning of the input symbols into two groups  $(1/16,1/16,2/16,3/16)$ and $(4/16,5/16)$ and defines the optimal function $f$. 

\begin{theorem}
\label{mainthm} Let $S=(\beta_1, \beta_2, \cdots, \beta_n)$ be a list of probability weights. 
For $n>k \geq 2$,
Algorithm~\ref{algorithm} gives an optimal solution of \eqref{eqnLXA}.
\end{theorem}
Before we prove Theorem~\ref{mainthm}, we state the following lemma.
 \begin{lemma}
 \label{implemma}
Assume $n>k\geq 2$. There is an optimal mapping $f$ minimizing \eqref{eqnLXA}, such that the two smallest numbers are in the same partition, \emph{i.e.,} $f(1)=f(2)$ for a list  $S=(\beta_1, \beta_2, \cdots, \beta_n)$ of probability
weights where $\beta_1\leq \beta_2\leq \beta_3\leq \ldots \leq \beta_n$.
 \end{lemma}
\begin{proof}[Proof of Lemma~\ref{implemma}]
Take a list
$S=(\beta_1, \beta_2, \cdots, \beta_n)$ where $\beta_1\leq \beta_2\leq \ldots \leq\beta_n$. Let $f$ be an optimal mapping minimizing \eqref{eqnLXA} such that $f(1)\neq f(2)$. There are two cases: 
\begin{enumerate}
\item
\label{case1} One cannot find $i\in \{3,4,\cdots, n\}$ such that $f(i)=f(1)$ or $f(i)=f(2)$. In this case, the Huffman code for $X$ given $A=f(1)$ or $A=f(2)$ has zero length.  Since $n>k$, one can find numbers $j_1, j_2\in \{3,4,\cdots, n\}$ such that 
$f(j_1)=f(j_2)=a$ for some $a\notin \{f(1), f(2)\}$.
We construct a new partition function $f'(\cdot)$ such that $f'(j_1)=f(1)$, $f'(1)=a$ and $f'(j_2)=f(2)$, $f'(2)=a$ and $f(\cdot)$, $f'(\cdot)$ are equal on the other values, then the expected length $L(X|A)$ decreases by $$\Delta=(\beta_{j_1}-\beta_1)\ell_a(j_1)+(\beta_{j_2}-\beta_2)\ell_a(j_2),$$ where $\ell_a(j_1)$ and $\ell_a(j_2)$ are the lengths of the Huffman codewords assigned to $X=j_1$ and $X=j_2$ conditioned on $A=a$.\footnote{We are constructing a feasible (not necessarily Huffman-optimal) code for the new partition by reusing old codewords. The new Huffman optimum has a length less than or equal to that feasible code's length, giving the desired contradiction.} 
This is a contradiction with optimality of $f$ unless $\Delta=0$. If $\Delta=0$, $f'$ will also be  an optimal partition function satisfying $f'(1)=f'(2)$.

\item  
There exists some $i \in \{3,4,\cdots, n\}$ such that either $f(i)=f(1)$, or $f(i)=f(2)$. Let $a=f(1)$ and $b=f(2)$. Let $\mathcal{C}_{a}$ and $\mathcal{C}_{b}$ be the Huffman codes for the distribution of $X$ given $A=a$ and, $A=b$ respectively. At least one of the Huffman codes $\mathcal{C}_{a}$ and $\mathcal{C}_{b}$ has a non-zero average length. 
In any Huffman code with at least two symbols,  two of the longest codewords have the same length, and they are assigned to symbols with the lowest probabilities \cite{huffman}. First assume that $\ell_{a}(1) \geq  \ell_{b}(2)$. Then, $\mathcal{C}_{a}$ certainly has more than one codeword. Since $X=1$ has the least probability (corresponds to $\beta_1$), it has the least probability in its group, and also its codeword has the largest length in code $\mathcal{C}_{a}$. Moreover, there is another codeword with this length that corresponds to some $i_1\in \{3,4,\cdots, n\}$. Thus, $f(i_1)=a$ and 
$\ell_a(i_1)=\ell_a(1)$. Construct the new partition function $f'(\cdot)$ such that $f'(i_1)=f(2)$, $f'(2)=f(1)$ and $f(\cdot)$, $f'(\cdot)$ are equal on the other values. Using the same Huffman codewords as before (the codeword that was assigned to $2$ should now be assigned to $i_1$ and vice versa),  this change in the mapping reduces the expected length of codewords by at least
 $\Delta=(\ell_a(1)-\ell_b(2))(\beta_{i_1}-\beta_2).$ This is a contradiction unless $\Delta=0$ which implies optimality of $f'$. For the case $\ell_{a}(1) <  \ell_{b}(2)$, a similar argument goes through. 
 Therefore, similar to Case~\ref{case1}, we can construct an optimal mapping satisfying $f'(1)=f'(2)$.\qedhere
\end{enumerate}
\end{proof}
\begin{proof}[Proof of Theorem~\ref{mainthm}]

Consider a list $S=(\beta_1, \beta_2, \cdots, \beta_n)$ where $\beta_1\leq \beta_2\leq \cdots \leq \beta_n$. We show in Lemma~\ref{implemma} that there is an optimal partition (minimizing $L(X|A)$) such that the two smallest numbers in the list, namely $\beta_1$ and $\beta_2$, belong to the same subset in that partition. Knowing this, we can simply merge these two numbers together and replace $\beta_1$ and $\beta_2$ by $\beta_1+\beta_2$. 
We claim that the problem then reduces to finding an optimal partition for a new list
$(\beta_1+\beta_2, \beta_3,\cdots, \beta_n)$. The reason is as follows: assume that $\beta_1$ and $\beta_2$ are the weights of symbols $1$ and $2$. Assume that $f(1)=f(2)=i$ for some $1\leq i\leq k$. Then, in the distribution of $X$ conditioned on $A=i$, the probabilities $\beta_1/p(A=i)$ and $\beta_2/p(A=i)$ are still the two smallest numbers. It is known that a Huffman code starts off by merging the two symbols of lowest probabilities. Therefore, as $\beta_1/p(A=i)$ and $\beta_2/p(A=i)$ are in the same group, an optimal Huffman code also begins by merging $\beta_1/p(A=i)$ and $\beta_2/p(A=i)$ into a symbol $(\beta_1+\beta_2)/p(A=i)$. 
Thus, there is a one-to-one correspondence between Huffman codes for partitions of $(\beta_1, \beta_2, \cdots, \beta_n)$ in which $\beta_1$ and $\beta_2$ are in the same group, and Huffman codes for partitions of $(\beta_1+\beta_2, \beta_3, \cdots, \beta_n)$. Moreover, from \eqref{eqnLXA2},  $L(X|A)$ for a
partition of $(\beta_1, \beta_2, \cdots, \beta_n)$ in which $\beta_1$ and $\beta_2$ are in the same group, equals $p(A=i)\times (\beta_1+\beta_2)/p(A=i)=\beta_1+\beta_2$ plus $L(X|A)$ for the corresponding partition of $(\beta_1+\beta_2, \beta_3, \cdots, \beta_n)$. Since $\beta_1+\beta_2$ is a constant that does not depend on the choice of partitions, it suffices to proceed by minimizing $L(X|A)$ over partitions of $(\beta_1+\beta_2, \beta_3, \cdots, \beta_n)$. This completes the proof. By applying the same argument iteratively $(n-k)$ times, we obtain a list of size $k$, for which the optimal partitioning is to place each element in a separate subset.

\end{proof}

\section{Number Partition Problem}\label{s3}
Let $S=(\beta_1, \beta_2, \cdots, \beta_n)$ be an arbitrary list of $n$ positive numbers. We can think of $S$ as being an unnormalized probability distribution. We can always normalize elements of $S$, and this scaling will not affect the balanced partitioning problem.
Given a list $S=(\beta_1, \beta_2, \cdots, \beta_n)$, we define a random variable $X$ with the alphabet set
 $\mathcal{X}=\{1,2,\cdots, n\}$ such that 
$\mathbb{P}[X=i]={\beta_i}/{M}$
where $M=\sum_{i}\beta_i$. Let $\mathcal{A}$ be a set of size $k$. Then, an $(n,k)$-partition function is a mapping $f:\mathcal{X} \to \mathcal{A}$. Here $\mathcal{A}=\{1,2,\cdots, k\}$. This partitions $\mathcal{X}$ into $k$ sets $f^{-1}(a)$ for $a\in \mathcal{A}$.  Let $A=f(X)$. Then, the marginal distribution of $A$ is proportional to the subset-sum values. Thus, each partition corresponds to a function
$
f:\mathcal{X}\rightarrow \mathcal{A}
$
where $\mathcal{A}$ is a set of size $k$ and the distribution on $X$ is specified by the given list of numbers.

\subsection{Compression objective function}

The Entropic objective function aims to maximize $H(A)$ where $A=f(X)$ for some function $f:\mathcal{X}\rightarrow\{1,2,\cdots, k\}$.
Observe that
$
H(X,A)=H(X)+H(A \vert X) =H(A)+H(X \vert A)$.
Since $H(A \vert X)=0$, we have $H(A)= H(X)-H(X \vert A)$. Since $H(X)$ does not depend on the choice of partition function, we can minimize $H(X \vert A)$ instead of maximizing $H(A)$. 
The conditional entropy $H(X \vert A)$ can be understood as the average uncertainty remaining in $X$ when $A$ is revealed. Moreover, $H(X \vert A)$ approximates the average number of bits required to compress the source $X$ when $A$ is revealed. Thus, we can define a new \emph{compression objective function} of minimizing $L(X|A)$ as defined in \eqref{eqnLXA2} as our new objective function for balanced partitioning. For this objective function, Algorithm \ref{algorithm} yields the optimal solution, which is actually very efficient. We believe that a similar procedure can be applied more broadly to combinatorial optimization problems such as those arising in combinatorial discrepancy by using information-theoretic concepts and metrics to redefine their objective functions. This approach may lead to new connections between information theory, computer science, and combinatorics. 

The authors in~\cite{rvh} show that Algorithm~\ref{algorithm}
computes an approximate solution to the entropy maximization problem over
partitions ($\max H(A)$), with a maximum deviation from the optimal solution bounded by
$\delta \approx 0.086$.  Consequently, the difference between the
optimal value of $H(A)$ and the entropy achieved by the optimal solution of
$L(X |A)$ is at most $\delta \approx 0.086$. Notably, this bound is
independent of both $k$ and $n$.

\subsection{Principle of optimality} 
A property of the Min-Difference objective function is that in each optimal $k$-way partition if the numbers in any $k-1$ subsets are optimally partitioned, the new partition is also optimal (principle of optimality) \cite{korf2009}.  This property underlies the recursive algorithms of \cite{korf2009}.  A different and a kind of more general principle of optimality (called recursive principle of optimality in \cite{korf2011}) is valid for Min-Max and Max-Min objective functions \cite{korf2010,korf2011}. It says that for any optimal $k$-way partition with $k$ subsets and any $k_1+k_2=k$, combining any optimal $k_1$-way partition of the numbers in $k_1$ subsets and any optimal $k_2$-way partition of the numbers in the other $k_2$ subsets results in an optimal partition for the main set \cite{korf2011}.  
In \cite{moffit2013}, the authors develop a \textit{principle of weakest-link optimality} for minimizing the largest subset sum. In \cite{korf2014}, the authors incorporate the ideas of \cite{korf2011,moffit2013} and \cite{korfethan2013} and develop an algorithm that is similar to \cite{moffit2013} in the sense of weakest-link optimality. See \cite{ethan18} for a review.

Next, we prove that the entropic objective function has a principle of optimality property similar to the one in \cite{korf2011} (which is the basis of algorithms in \cite{korf2011}).
\begin{theorem}
\label{pop} Take a random variable $X$ with alphabet set $\mathcal{X}$ and an optimal $(n,k)$-partition function $f:\mathcal{X}\to \mathcal{A}$ (for entropy objective function). Let  $\mathcal{A}_1$, $\mathcal{A}_2$ be an arbitrary partition of $\mathcal{A}$ into two sets.  Define a partition of $\mathcal{X}$ into $\mathcal{X}_1$ and $\mathcal{X}_2$ by 
$\mathcal{X}_i=f^{-1}(\mathcal{A}_i)$. 
Define a random variable $X_i$ with the alphabet set $\mathcal{X}_i$ whose distribution equals the conditional distribution of $X$ given $A\in \mathcal{A}_i$. Set  $k_i=\vert \mathcal{A}_i \vert,\
n_i=\vert  \mathcal{X}_i \vert, \  i \in \{1,2 \}$. Let  $f_1:\mathcal{X}_1\rightarrow \mathcal{A}_1$ be an arbitrary  optimal $(n_1,k_1)$-partition function of $X_1$, and $f_2:\mathcal{X}_2\rightarrow \mathcal{A}_2$ be an arbitrary optimal $(n_2,k_2)$-partition function of $X_2$. Then, the following function is an optimal $(n,k)$-partition function for $X$:
$$
  f_c(x) = \begin{cases}
  f_1(x) & x \in \mathcal{X}_1\\
  f_2(x) & x \in \mathcal{X}_2
  \end{cases}$$
    
\end{theorem}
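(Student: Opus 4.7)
The plan is to decompose $H(A)$ using the chain rule of entropy applied to a binary side variable that records which half of $\mathcal{A}$ contains the output. Introduce $T\in\{1,2\}$ defined by $T=i$ iff $A\in\mathcal{A}_i$, equivalently iff $X\in\mathcal{X}_i$. Since $T$ is a deterministic function of $A$ (via the fixed partition $\mathcal{A}_1,\mathcal{A}_2$), we have $H(A)=H(A,T)=H(T)+H(A\mid T)$. The term $H(T)$ is determined entirely by the weights $p_i=\Pr(X\in\mathcal{X}_i)$, which depend on the partition $\mathcal{X}_1,\mathcal{X}_2$ of $\mathcal{X}$ but not on how $f$ further refines each $\mathcal{X}_i$.

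Next, I would identify the conditional entropies $H(A\mid T=i)$ with the entropy of a natural restriction of $f$. By the definition of $X_i$, the conditional distribution of $X$ given $T=i$ equals the distribution of $X_i$, so the distribution of $A$ given $T=i$ equals the distribution of $f|_{\mathcal{X}_i}(X_i)$. Observe that $f|_{\mathcal{X}_i}:\mathcal{X}_i\to\mathcal{A}_i$ is a valid $(n_i,k_i)$-partition function in the sense of the definition (no surjectivity is required). Hence
\[ H(A)=H(T)+p_1\,H(f|_{\mathcal{X}_1}(X_1))+p_2\,H(f|_{\mathcal{X}_2}(X_2)). \]

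Applying the same chain-rule decomposition to $A_c=f_c(X)$: because $f_c$ sends $\mathcal{X}_i$ into $\mathcal{A}_i$ by construction, the analogous indicator $T_c$ equals $T$ as a function of $X$, so $H(T_c)=H(T)$ and the weights $p_i$ are unchanged. Therefore $H(A_c)=H(T)+p_1 H(f_1(X_1))+p_2 H(f_2(X_2))$. Since $f_i$ is an optimal $(n_i,k_i)$-partition function of $X_i$, we get $H(f_i(X_i))\geq H(f|_{\mathcal{X}_i}(X_i))$ for each $i$, and summing gives $H(A_c)\geq H(A)$. The reverse inequality follows because $f$ is optimal for $X$ and $f_c$ is just another $(n,k)$-partition function of $X$, so $H(A_c)=H(A)$ and $f_c$ is optimal.

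The argument is essentially a textbook application of the chain rule together with the stated optimality hypotheses, so I do not anticipate a serious obstacle. The only items deserving a bit of care are the bookkeeping for degenerate cases (for instance, if some $\mathcal{X}_i$ is empty, or if $f$ does not use every element of $\mathcal{A}_i$, so that $f|_{\mathcal{X}_i}$ is an $(n_i,k_i)$-partition function with a possibly smaller image), and the identification $T_c=T$, which is precisely what allows the entropies on the two sides to be compared term by term.
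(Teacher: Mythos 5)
Your proposal is correct and follows essentially the same route as the paper: the chain-rule decomposition $H(A)=H(T)+p_1H(A\mid T=1)+p_2H(A\mid T=2)$ via the indicator $T$ is exactly the Grouping Axiom of Entropy (Lemma~\ref{group}) that the paper's proof invokes, and the two-sided comparison of $H(A_c)$ with $H(A)$ mirrors the paper's argument (which phrases one direction as a contradiction). Your version is somewhat more explicit about the bookkeeping, but the underlying idea is identical.
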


The following known result is the key to prove Theorem~\ref{pop}.
\begin{lemma}[Grouping Axiom of Entropy]
\label{group}
For any probability vector $\boldsymbol{p}=(p_1,p_2,\ldots,p_k)$ and $1\leq r\leq k-1$,
\begin{align}
H(p_1,p_2,\ldots, p_k)=&H\left(\sum_{i=1}^{r}p_i, \sum_{i=r+1}^{k}p_i \right)\nonumber \\
&+\left( \sum_{i=1}^{r}p_i \right) H\left(\frac{p_1}{\sum_{i=1}^{r}p_i},\ldots,\frac{p_r}{\sum_{i=1}^{r}p_i} \right) \nonumber \\
&+\left( \sum_{i=r+1}^{k}p_i \right) H\left(\frac{p_{r+1}}{\sum_{i=r+1}^{k}p_{i}},\ldots,\frac{p_k}{\sum_{i=r+1}^{k}p_i} \right).
\end{align}
\end{lemma}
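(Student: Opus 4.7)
The plan is to give a direct algebraic proof using the identity $\log(ab)=\log a+\log b$, though I will also note how the same identity can be seen as an instance of the chain rule for entropy.

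Let $q_1=\sum_{i=1}^{r}p_i$ and $q_2=\sum_{i=r+1}^{k}p_i$, so that $q_1+q_2=1$. I would start from the definition
$$H(p_1,\ldots,p_k)=-\sum_{i=1}^{k}p_i\log_2 p_i,$$
split the sum into the indices $1\le i\le r$ and $r+1\le i\le k$, and then rewrite each $p_i$ in the first block as $q_1\cdot(p_i/q_1)$ and each $p_i$ in the second block as $q_2\cdot(p_i/q_2)$. Using $\log_2(q_j\cdot p_i/q_j)=\log_2 q_j+\log_2(p_i/q_j)$, the first block contributes
$$-\sum_{i=1}^{r}p_i\log_2 q_1-\sum_{i=1}^{r}p_i\log_2\frac{p_i}{q_1}=-q_1\log_2 q_1+q_1 H\!\left(\frac{p_1}{q_1},\ldots,\frac{p_r}{q_1}\right),$$
and an analogous manipulation on the second block yields $-q_2\log_2 q_2+q_2 H\!\left(p_{r+1}/q_2,\ldots,p_k/q_2\right)$. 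Summing the two blocks and recognizing $-q_1\log_2 q_1-q_2\log_2 q_2=H(q_1,q_2)$ gives exactly the right-hand side of the lemma.

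Conceptually, the same identity arises by letting $Y$ be a random variable with $\Pr[Y=i]=p_i$ and defining $Z=\mathbbm{1}[Y\le r]$. Since $Z$ is a deterministic function of $Y$, the chain rule yields $H(Y)=H(Y,Z)=H(Z)+H(Y\mid Z)$; here $H(Z)=H(q_1,q_2)$, while $H(Y\mid Z)$ expands into the weighted sum of the two conditional entropies on the right. I would mention this interpretation briefly after the algebraic derivation, since it also makes the statement natural (and is what gets used inside the proof of Theorem~\ref{pop}).

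There is no real obstacle: the only thing to be careful about is the degenerate case where $q_1=0$ or $q_2=0$, in which one of the conditional distributions is undefined. I would handle this by adopting the standard convention $0\log_2 0=0$ and the convention that $0\cdot H(\cdot)=0$, under which both sides trivially agree, so that the identity holds for every probability vector $\boldsymbol{p}$ and every $1\le r\le k-1$.
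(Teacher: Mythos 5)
Your proof is correct. Note that the paper itself states Lemma~\ref{group} without proof, treating the grouping axiom as a standard fact, so there is no argument in the paper to compare against; your algebraic derivation (splitting the sum at $r$, factoring each $p_i$ as $q_j\cdot(p_i/q_j)$, and expanding the logarithm) is the standard and complete way to establish it, and your attention to the degenerate cases $q_1=0$ or $q_2=0$ is appropriate. The chain-rule interpretation via $H(Y)=H(Z)+H(Y\mid Z)$ is also accurate and is indeed the form in which the lemma is invoked in the proof of Theorem~\ref{pop}.
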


\begin{proof}[Proof of Theorem~\ref{pop}]
Let $f(X)=A$ be an optimal $(n,k)$-partition function of $X$.
Suppose that $f_1(X_1)=A'_1$ is an arbitrary optimal partition function for $X_1$. Thus, $H(A'_1) \geq H(f(X_1))$ by definition. On the other hand, using Lemma~\ref{group}, we have $H(A'_1) \leq H(f(X_1))$, because otherwise 
combining $f_1$ and $f\left\vert_{\mathcal{X}_2} \right.$ results in a $(n,k)$-partition function $f'(X)=A'$ such that $H(A')> H(A)$. That is a contradiction with the optimality of $f$.
A similar argument is true for $X_2$. Hence combining any optimal  $(n_1,k_1)$-partition function of $X_1$ with $(n_2,k_2)$-partition function of $X_2$ must yield an optimal $(n,k)$-partition function of $X$.    

\end{proof}
\begin{remark}
    One can directly check that the \emph{product objective function} as defined in \eqref{prod-objf} has a principle of optimality too.
\end{remark}

\subsection{$\alpha$-divergence objective function}

Considering the definition of $D_\alpha$ in Definition \ref{defR}, observe that the Min-Max objective function is equivalent to
\[
\min_{f:\mathcal{X}\rightarrow \mathcal{A}}D_{\infty}(q_A\|u_A)
\]
 where $A=f(X)\sim q_A$ and $u_A$ is the uniform distribution on the alphabet of $\mathcal{A}$. Similarly, the entropic objective function is equivalent to minimizing (over all functions $f:\mathcal{X}\rightarrow \mathcal{A}$) of  $D(q_A\|u_A)$ where $D$ is the KL divergence.

Now, instead of $D_{\infty}(q_A\|u_A)$, consider $D_{\infty}(u_A\|q_A)$. Note that the Max-Min objective function can be  expressed as 
\[
\min_{f:\mathcal{X}\rightarrow \mathcal{A}}D_{\infty}(u_A\|q_A)
\]
as it is equivalent to maximizing the smallest subset sum objective function. Next, if instead of minimizing $D_{\infty}(u_A\|q_A)$ we consider minimizing $D(u_A\|q_A)$, we obtain the \emph{product objective function}. These observations suggest that
\begin{align}
\label{entopt1}
\argmin_{\small \small{f:\mathcal{X}\rightarrow \mathcal{A} }} D_\alpha (u_{A} \| q_{A}). 
\end{align}  
 for some arbitrary $\alpha>0$ might be a suitable criterion to find a uniform partition. We call minimizing $D_{\alpha}(u_A\|q_A)$  the \emph{$\alpha$-divergence } objective function. 
 
 In \cite{sason}, $D_\alpha(q_A \| u_A)$ is considered while we consider $D_\alpha(u_A\|q_A)$. The following theorem uses similar ideas to derive lower and upper bounds for the
\emph{$\alpha$-divergence} objective function.

{\color{black}

\begin{theorem}
\label{main1}
Take a random variable $X$ on an alphabet set of size $n$ with probability mass vector $\boldsymbol{p}=(p_1,p_2,\ldots,p_n)$ ($p_i$'s are nonnegative and add up to one). Assume that $p_1\geq p_2\geq \cdots\geq p_n\geq 0$. Let $\mathcal{A}$ be a set of size $k$. Then
for all $2 \leq k<n$  and $\alpha>0$ we have
\begin{align}
\min_{\small \small{f:\mathcal{X}\rightarrow \mathcal{A} }} D_\alpha (u_{A} \| q_{A}) \in \left[D_\alpha(u_{A} \| q_{\hat{A}_k}),  D_\alpha(u_{A} \| q_{\hat{A}_k})+g(\alpha)\right],\label{boundalphdiv} 
\end{align}  
where $A=f(X)\sim q_A$ and $u_A$ is the uniform distribution on $\mathcal{A}$. In addition, $q_{\hat{A}_k}$  is the uniform distribution on $\{ 1, \ldots, k \}$ if $p_1<1/k$; otherwise, if $p_1\geq 1/k$, 
\begin{align}
\label{rm1}
 q_{\hat{A}_k}(i)=\begin{cases}
 p_i & i=1,\ldots,i^* \\
 \dfrac{1}{k-i^*} \sum_{j=i^*+1}^n p_j& i=i^*+1,\dots,k,
 \end{cases}
\end{align}
where $i^*$ is the maximum $i$ such that $p_i\geq \dfrac{1}{k-i}{ \sum_{j=i+1}^n p_j}$. Also,
\begin{align}
\label{ggg}
g(\alpha)=\begin{cases}
\log \left(\dfrac{\alpha-1}{2^{\alpha}-2}\right)-\dfrac{\alpha}{\alpha-1}\log \left(\dfrac{{\alpha}}{2^\alpha-1}\right) & \alpha \neq 1 \\
\log \left(\dfrac{2}{e\ln2}\right) \approx 0.08607& \alpha = 1.
 \end{cases}
\end{align}
Moreover, Algorithm \ref{algorithm} yields a value in the interval in \eqref{boundalphdiv} so it is within $g(\alpha)$ of the optimal solution.
\end{theorem}

\begin{proof}
\label{pnew}
The proof follows similar steps as in \cite{rvh}. Let $\hat{\boldsymbol{q}}=(\hat{q}_1,\hat{q}_2,\ldots,\hat{q}_k)$ be the vector presenting the probability mass function $q_{\hat{A}_k}(.)$, and
$\boldsymbol{P}_k$ for every $k\geq 2$ be the set of all probability mass vectors of size $k$. First we prove that
\begin{align}
\label{eqrrr}
    \min_{\small \small{f:\mathcal{X}\rightarrow \mathcal{A} }} D_\alpha (u_{A} \| q_{A}) \geq D_\alpha(u_{A} \| q_{\hat{A}_k}).
\end{align}
Using {\cite[Lemma~3]{rvh}}, the probability mass vector of each $(n,k)$-partition function $f(X)$ majorizes $\boldsymbol{p}$. On the other hand, \cite[Lemma~4]{rvh} shows that every probability mass vector $\boldsymbol{q} \in \boldsymbol{P}_k$ which majorizes $\boldsymbol{p}$  majorizes $\hat{\boldsymbol{q}}$ too. Therefore, Schur-convexity of the R\'enyi divergence yields \eqref{eqrrr}.

Algorithm~\ref{algorithm} is similar to the Huffman algorithm, but 
terminates after $n-k$ steps, when the size of the output list is $k$ (all
merged symbols belong to the same group in the resulting partition). Let $\boldsymbol{h}=(h_1,h_2,\ldots,h_k)$ be the output of the algorithm. The components $h_i$ represent the weights of the groups in the resulting
partition, sorted in decreasing order; each $h_i$ is obtained by merging a
subset of the original probabilities $p_j$.
Thus, we can let $A^H=f^H(X)$, where $f^H$ is induced by the
partition produced by Algorithm~\ref{algorithm} when the input is the
probability mass vector of $X$, $\boldsymbol{p}=(p_1,p_2,\ldots,p_n)$, and the
integer $k$. Here, 
$A^H=f^H(X)$, has pmf $\boldsymbol{h}=(h_1,h_2,\ldots,h_k)$.

To complete the proof of \eqref{boundalphdiv}, it suffices to show that 
\begin{align}
  D_\alpha \bigl(u_{A} \,\|\, q_{{A}^H}\bigr)
  \le
  D_\alpha \bigl(u_{A} \,\|\, q_{\hat{A}_k}\bigr)
  + g(\alpha),
\end{align}
where $q_{{A}^H}(\cdot)$ denotes the output of Algorithm~\ref{algorithm}
with the vector $\boldsymbol{p}$ as input, and $g(\alpha)$ is defined in
\eqref{ggg}.

Let $i^*$ be the largest index such that for each $j\leq i^*$ we have $h_j = p_j$ (i.e., $h_j$ is not constructed by merging steps in the algorithm). If $h_1 \neq p_1$, $i^*$ is set to be $0$. Let
\begin{align}
    \bar{\boldsymbol{q}}=\left(  h_{i^*+1}/S,\ldots, h_{k}/S  \right),
\end{align}
where $$S=\sum_{j=i^*+1}^k h_j $$ is the sum of the $k-i^*$ smallest masses of $\boldsymbol{h}$. It is shown in the proof of \cite[Lemma 5]{rvh} that 
\begin{align}
    \dfrac{\bar{q}_{\text{max}}}{\bar{q}_\text{min}}=\dfrac{h_{i^* +1}}{h_k} \leq 2.
\end{align}
Using Lemma~\ref{lemma-w} with $\rho =2$, we have
\begin{align}
\label{newb}
D_\alpha (\boldsymbol{u}_{k-i^*} \| \bar{\boldsymbol{q}}) \leq g(\alpha),
\end{align}
where $\boldsymbol{u}_{k-i^*}$ is the uniform mass vector of size $k-i^*$.
For $\alpha \in (0,1) \cup (1, \infty)$, we have
\begin{align}
 D_\alpha  ( \boldsymbol{u}_{k} \| {\boldsymbol{h}}) 
  &= \dfrac{1}{\alpha -1} \log \sum_{j=1}^{k} {\dfrac{1}{k^{\alpha}}} h_j^{1-\alpha}\\ \label{nsp}
 &= \dfrac{\alpha}{\alpha-1} \log \dfrac{1}{k}+ \dfrac{1}{\alpha-1} \log \sum_{j=1}^{k} h_j^{1-\alpha}.
 \end{align}
Now, continue with the second part of \eqref{nsp}
\begin{align}
&\dfrac{1}{\alpha-1}  \log \sum_{j=1}^{k} h_j^{1-\alpha} \\
 & \  = \dfrac{1}{\alpha-1} \log \left( \sum_{j=1}^{i^*} h_j^{1-\alpha} + \sum_{j=i^*+1}^{k} h_j^{1-\alpha} \right) \\ 
 & \ = \dfrac{1}{\alpha-1} \log \left( \sum_{j=1}^{i^*} h_j^{1-\alpha} +S^{1-\alpha} 2^{\log \sum_{j=1}^{k-i^*} \bar{q}_j^{1-\alpha}} \right)\\ \label{nd1}
  & \ \leq \dfrac{1}{\alpha-1} \log \left( \sum_{j=1}^{i^*} h_j^{1-\alpha}+S^{1-\alpha}(k-i^*)^\alpha 2^{ (\alpha-1)g(\alpha)} \right),
 \end{align}
where \eqref{nd1} follows from \eqref{newb}. Let
\begin{align}
\label{nqstar}
 q^*_j := \begin{cases}
 h_j & j=1,\ldots,i^*, \\
 \dfrac{S}{k-i^*} & j=i^*+1,\dots,k
 \end{cases}
\end{align}
be the components of the probability mass vector $\boldsymbol{q}^*$ of size $k$. From \eqref{nd1} and \eqref{nqstar} we have
\begin{align}
&\dfrac{1}{\alpha-1}  \log \sum_{j=1}^{k} h_j^{1-\alpha} \\    & \ \leq \dfrac{1}{\alpha-1} \log \left( \sum_{j=1}^{i^*} {q}_j^{*1-\alpha}+\sum_{j=i^*+1}^{k} q_j^{*1-\alpha}  2^{ (\alpha-1)g(\alpha)} \right) \\
& \ = \dfrac{1}{\alpha-1} \log \left( \sum_{j=1}^{k} {q}_j^{*1-\alpha}+\sum_{j=i^*+1}^{k} q_j^{*1-\alpha}  \left( 2^{ (\alpha-1)g(\alpha)}-1 \right) \right)\\ \nonumber
& \ = \dfrac{1}{\alpha-1} \log \sum_{j=1}^{k} {q}_j^{*1-\alpha}\\\label{neq1}
& \ \ \ \ + \dfrac{1}{\alpha-1} \log \left( 1+ w \left( 2^{ (\alpha-1)g(\alpha)}-1 \right) \right),
\end{align}
where 
\begin{align}
\label{neww}
    w=\dfrac{\sum_{j=i^*+1}^{k} q_j^{*1-\alpha}}{\sum_{j=1}^{k} q_j^{*1-\alpha}} \in [0,1].
\end{align}
For $\alpha > 0$, we know that $g(\alpha)>0$ ($g(\alpha)$ is $c_\alpha^\infty (2)$ in \cite[Lemma~4]{sason}), therefore, by using \eqref{neww}  for $\alpha \in (1, \infty)$ we have

\begin{align}
    & \dfrac{1}{\alpha-1} \log \left( 1+ w \left( 2^{ (\alpha-1)g(\alpha)}-1 \right) \right) \\
 & \ \leq  \dfrac{1}{\alpha-1} \log \left( 1+ \left( 2^{ (\alpha-1)g(\alpha)}-1 \right) \right)\\
  & \ \leq  \dfrac{1}{\alpha-1} \log \left( 2^{ (\alpha-1)g(\alpha)} \right) \\
 & \ \leq g(\alpha).
\end{align}
 Similarly, for $\alpha \in (0, 1)$, the following inequality holds.
\begin{align}
\label{nfinal}
    \dfrac{1}{\alpha-1} \log \left( 1+ w \left( 2^{ (\alpha-1)g(\alpha)}-1 \right) \right) \leq g(\alpha).
\end{align}
As a result, for $\alpha \in (0,1) \cup (1, \infty)$ we get
\begin{align} 
 &  D_\alpha  ( \boldsymbol{u}_{k} \| {\boldsymbol{h}}) \\ 
 & \ = \dfrac{\alpha}{\alpha-1} \log \dfrac{1}{k}+ \dfrac{1}{\alpha-1} \log \sum_{j=1}^{k} h_j^{1-\alpha}\\ \label{neq2}
  & \ \leq \dfrac{\alpha}{\alpha-1} \log \dfrac{1}{k}+ \dfrac{1}{\alpha-1} \log \sum_{j=1}^{k} {q}_j^{*1-\alpha} +g(\alpha) \\
  & \ = D_\alpha  ( \boldsymbol{u}_{k} \| {\boldsymbol{q}^*})+g(\alpha),
\end{align}
where \eqref{neq2} follows from \eqref{neq1} and \eqref{nfinal}. Since $g(\alpha)$ is continuous at $\alpha =1$,  the inequality
\begin{align}
  D_\alpha  ( \boldsymbol{u}_{k} \| {\boldsymbol{h}}) \leq   D_\alpha  ( \boldsymbol{u}_{k} \| {\boldsymbol{q}^*})+g(\alpha)
\end{align}
is also valid for $\alpha=1$, by taking the limit on both sides. Finally, it is shown in the proof of \cite[Lemma 5]{rvh} that  $\boldsymbol{q}^* \preceq \hat{\boldsymbol{q}}$.
Together with the Schur-convexity of $D_\alpha(\boldsymbol{u}_k \| \cdot)$, this
implies that
\begin{align}
     D_\alpha\!\left( \boldsymbol{u}_k \,\|\, \boldsymbol{h} \right)
     \le
     D_\alpha\!\left( \boldsymbol{u}_k \,\|\, \hat{\boldsymbol{q}} \right)
     + g(\alpha),
\end{align}
for all $\alpha>0$. 
\end{proof}

\section{Conclusion}
We solved a new source coding problem and related it to the number partitioning problem. In this paper, we introduce two new objective functions for a classical
computer science problem, the number partitioning problem: the compression
objective and the $\alpha$-divergence objective functions. We prove the
optimality of a variant of the Huffman algorithm with respect to the
compression objective. Additionally, we provide upper and lower bounds
for the $\alpha$-divergence objective function. A principle of optimality, which
plays a crucial role in algorithm design, is established for both the
$1$-divergence objective and the entropy objective functions introduced in
previous work.

\bibliographystyle{IEEEtran}
\bibliography{IEEEabrv,bibtex}

\end{document}